\documentclass[12pt]{article}
\usepackage[centertags]{amsmath}
\usepackage{amssymb}
\usepackage{amsthm}
\textwidth=15cm \topmargin -1.5cm
\textheight=22.5cm
\evensidemargin 0.4cm
\oddsidemargin 0.6cm
\frenchspacing

\newtheorem{defi}{Definition}[section]

\newtheorem{tw}{Theorem}[section]
\newtheorem{corollary}{Corollary}[section]
\newtheorem{lem}{Lemma}[section]

\newtheorem{prop}{Proposition}[section]
\newtheorem*{rem}{Remark}

\def\={\hspace{-3mm}&=&\hspace{-3mm}}

\renewcommand{\baselinestretch}{1,2}
\title{ The initial value problem for ordinary differential equations with infinitely many derivatives }
\author{ P. G\'orka$^1$\footnote{E-mail: P.Gorka@mini.pw.edu.pl}~, H. Prado$^2$\footnote{E-mail: humberto.prado@usach.cl}~,
and E.G. Reyes$^3$\footnote{E-mail: ereyes@fermat.usach.cl} \\
\small{Department of Mathematics and Information Sciences,}\\
\small{Warsaw University of Technology,}\\
\small{Pl. Politechniki 1, 00-661 Warsaw, Poland} \\
\small{$^{2,3}$ Departamento de Matem\'atica y Ciencia de la
Computaci\'on,} \\
\small{ Universidad de Santiago de Chile }\\
\small{Casilla 307 Correo 2, Santiago, Chile }}

\begin{document}

\maketitle

\begin{abstract}
We study existence, uniqueness and regularity of solutions for
ordinary differential equations with infinitely many derivatives such as
(linearized versions of) nonlocal field equations of motion appearing in particle physics, nonlocal
cosmology and string theory. We develop an appropriate Lorentzian functional calculus via Laplace
transform which allows us to interpret rigorously an operator of the form $f(\partial_t)$ on the half
line, in which $f$ is an analytic function. We find the most general solution to the equation
\begin{equation*}
f(\partial_t) \phi = J(t) \; , \; \; \; t \geq 0 \; ,
\end{equation*}
in the space of exponentially bounded functions, and we also
analyze in full detail the delicate issue of the initial value problem. In particular, we state conditions
under which the solution $\phi$ admits a finite number of derivatives, and we prove rigorously
that if an a priori data directly connected with our Lorentzian calculus is specified, then the initial
value problem is well-posed and it requires only a {\em finite number} of initial conditions.
\end{abstract}

\section{Introduction}\label{sec:Introduction}

In this article we consider ordinary differential equations with an infinite number of
derivatives. These equations have appeared recently as field equations of motion in particle physics
\cite{Mo1,Mo2,Mo3}, in string theory \cite{BCK,D,EW,M,Ta,V,VV,VVZ,W} and in gravity and
cosmology \cite{AV,B,BBC,BK1,BK2,BGKM,Moeller}. For instance, an important equation in this class is
\begin{equation} \label{padic-1}
p^{a\, \partial^2_t}\phi = \phi^p \; , \; \; \; \; \; a > 0 \; .
\end{equation}
Equation (\ref{padic-1}) describes the dynamics of the open $p$-adic string for the scalar tachyon
field (see \cite{AV,BK2,D,Moeller,V,VV,VVZ} and references therein) and it can be understood,
at least formally, as an equation in an infinite number of derivatives if we
expand the left hand side as a power series in $\partial^2_t$. This
equation has been studied rigorously via integral equations of
convolution type in \cite{V,VV} (see also \cite{AV,Moeller}), and it has been also noted
that in the limit $p \rightarrow 1$, Equation (\ref{padic-1}) becomes
the {\em local} logarithmic Klein-Gordon equation \cite{BG,GS}.

In this paper we focus on analytic properties of nonlocal
{\em linear} ordinary differential equations of the form
\begin{equation} \label{abs222}
f(\partial_t) \phi = J(t)
\end{equation}
for $t$ in the half line, and we investigate the initial value problem.
We speak indistinctly of ``nonlocal equations'' or
``equations in an infinite number of derivatives'' in what
follows. We remark that linear nonlocal equations appeared in
mathematics already in the 1930's, see for instance \cite{Car} and
further references appearing in \cite{BK1}. More recently, they
have been considered in connection with the modern theory of
pseudo-differential operators, see for instance \cite{Du,HK,Tran}.
It appears to us, however, that a truly fundamental stimulus for
their study has been the realization that linear and nonlinear
nonlocal equations play a crucial role in contemporary physical
theories.

A serious problem for the development of a rigorous theory for nonlocal
equations has been the difficulties inherent in the understanding of the initial
value problem for equations such as (\ref{padic-1}) and (\ref{abs222}). These difficulties are
carefully analyzed in the classical paper \cite{EW} by Eliezer and
Woodard. One important remark appearing in \cite{EW} is that the
natural idea of considering nonlocal ordinary equations
\begin{equation}\label{xx}
F(t,q',q^{''},\cdots , q^{(n)},\cdots)=0
\end{equation}
as ``limits'' of higher order equations of the form
\begin{equation}\label{x}
G(t,q',q^{''},\cdots , q^{(n)})=0
\end{equation}
is bound to be plagued with problems such as the phenomenon of
Ostrogradsky instability known to appear in the (Lagrangian) study
of equations such as (\ref{x}), see \cite{BK1,EW}. Another
interpretative difficulty considered in \cite{EW} is the
frequently stated argument (see for instance \cite{B,Moeller})
that if an $n$th order ordinary differential equation requires $n$
initial conditions, then the ``infinite order'' equation
(\ref{xx}) requires infinitely many initial conditions, and
therefore the solution $q$ would be determined a priori (via power
series) without reference to the actual equation.

Our approach to these problems is, quite simply stated, to avoid
thinking of Equation (\ref{xx}) as a limit of finite order
equations. It seems to us that the Ostrogradsky instability should
be investigated anew in this nonlocal world, and the same point of
view should be taken with respect to the initial value problem.

Our point of view is to emphasize the fundamental role played by the Laplace
transform ---as a well-defined operator between Banach spaces, see
\cite{ABHN,Doetsch}--- in the analysis of Equation (\ref{abs222})
via a ``Lorentzian functional calculus'', and also in the correct
formulation of the initial value problem. Using this approach we
are able to generalize and formalize completely the interesting
arguments advanced in \cite{BK1,BK2}, in which the Laplace
transform is explicitly mentioned as a natural and powerful tool
for the study of nonlocal equations. Moreover, we can prove rigorously that if
an a priori data directly connected with our Lorentzian calculus is specified,
then the initial value problem is well-posed and it requires only a {\em finite number} of initial
conditions.

It is interesting that neither pseudo-differential operators nor
generalized functions (see for instance \cite{Du,HK,Tran}) appear
explicitly in this work: our ``symbol'' $f(s)$ appearing in
Equation (\ref{abs222}) is (in principle) an arbitrary
analytic function which may be beyond the reach of classical tools
(such as the ones appearing in \cite{H}) and moreover, we stress once
again that we wish to solve nonlocal
equations on the semi-axis $t \geq 0$, not on the real line where perhaps
we could use Fourier transforms and/or classical pseudodifferential analysis
\cite{H}. The importance of studying nonlocal equations on a semi-axis has been recently
stressed by Aref'eva and Volovich \cite{AV}: classical versions of Big Bang cosmological
models contain singularities at the beginning of time, and therefore the time variable
appearing in the field equations should vary over a half line. In \cite{AV} also appears an alternative
analysis of nonlocal equations which applies (for instance) to linearized versions of nonlocal tachyon
fields and $p$-adic strings, but not to linear equations with general analytic symbols.
We also point out that a theory of pseudo-differential operators on the
semi-axis has been recently developed by N. Hayashi and E.I.
Kaikina \cite{HK}, but the precise relation between this theory
and our work remains to be investigated.

This paper is organized as follows. In Section 2 we consider a
rigorous interpretation of nonlocal ordinary differential equations
via a Lorentzian functional calculus founded on the Laplace
transform. For instance, our approach can be applied to the
equation
\begin{equation}
\sqrt{\partial_t^2 - m^2}\,\phi = J(t) \; , \; \; \; \; t \geq 0
\; ,
\end{equation}(see \cite{BK1,Du})
in the space $L_\omega^\infty({\mathbb R}_{+})$, $\,\omega \in
{\mathbb R}$, of exponentially bounded functions. In Section 3 we
investigate (and propose a method for the solution of) initial value problems for linear equations of the
form (\ref{abs222}). This paper generalizes and refines our previous
work \cite{GPR_JMP}, where we considered only entire functions $f$
and we presented a preliminary version of our functional calculus.

\section{Ordinary nonlocal differential equations}

Motivated by the Barnaby-Kamran work \cite{BK1,BK2}, and also by
the interesting papers \cite{Moeller,V,VV}, we are particularly
interested in setting up a rigorous framework in which the initial
value problem for nonlocal equations can be unambiguously
understood. As stated in Section 1, in this work we
consider linear ordinary nonlocal equations of the form
\begin{equation} \label{lor_linear}
  f(\partial_t)\phi(t) - J(t) =0 \; , \; \; \; t > 0\; .
\end{equation}
This equation belongs to a class of equations studied long ago by
R. Carmichael and others \cite{Car,BK1}. In fact, Carmichael in
\cite{Car} states and proves a theorem on existence of solutions
for a class of equations of the form (\ref{lor_linear}) (see also
\cite{BK1}) using power series expansions and techniques from
classical complex analysis. Our approach is certainly based on
this earlier work but we go beyond it in that we
set up a rigorous ``Lorentzian'' functional calculus based on the Laplace transform.
As we will show, this approach effectively brings to
the fore the ``initial conditions'' issue.

We remark once more that solving (\ref{lor_linear}) is not a
simple application of the theory of classical pseudo-differential operators,
since the ``symbols'' $f(s)$ appearing in (\ref{lor_linear}) are in principle
arbitrary (analytic or entire) functions and our aim is to solve nonlocal equations on the semi-axis
$t \geq 0$.

\subsection{Lorentzian functional calculus}

We fix a number $\omega \in \mathbb{R}$. The space
$L^{\infty}_\omega(\mathbb{R_{+}})$ is the Banach space of all
complex-valued and exponentially bounded functions on
$\mathbb{R_{+}}$,
\[
L^{\infty}_\omega(\mathbb{R_{+}}) = \left\{ \phi \in
L^1_{loc}(\mathbb{R_{+}}) : \| \phi \|_{\omega , \infty} =
\mbox{ess} \sup_{t \geq 0} | e^{-\omega t} \phi(t) | < \infty
\right\} \; ,
\]
and the Widder space $C^\infty_W (\omega , \infty)$ is
\[
C^\infty_W (\omega , \infty ) = \left \{ r : (\omega , \infty)
\rightarrow \mathbb{C}\; / \; \| r \|_W = \displaystyle\sup_{n \in
\mathbb{N}_0} \displaystyle \sup_{ s > \omega } \left|
\frac{(s-\omega)^{n+1}}{n !}\, r^{(n)}(s) \right| < \infty \right
\} \; .
\]
As proven in the monograph \cite{ABHN}, functions in $C^\infty_W (\omega , \infty)$
extend analytically to the region $Re(s) > \omega$. Moreover, the Laplace transform
\[
\phi \in L^{\infty}_\omega(\mathbb{R_{+}}) \mapsto {\cal
L}(\phi)(s) = \int_0^\infty e^{-st} \phi (t) dt \in C^\infty_W
(\omega , \infty )
\]
is an isometric isomorphism from
$L^{\infty}_\omega(\mathbb{R_{+}})$ onto $C^\infty_W (\omega ,
\infty)$. We use this isomorphism to define the operator
$f(\partial_t)$ appearing in Equation (\ref{lor_linear}).
Specifically, we construct a functional calculus which allows us to consider $f(\partial_t)$ as
a linear operator on $L^{\infty}_\omega (\mathbb{R_{+}})$.

First of all we make a technical assumption on the function $f$,
generalizing our previous work \cite{GPR_JMP} where we considered
only entire functions: We assume that there exist real numbers
$R_f > 0$ and $\omega_f$ with $\omega_f < R_f$, such that
the function $f$ is analytic in the domain
\[
\{ s : | s | < R_f \} \cup \{s : Re(s) > \omega_f \} \; .
\]

We note that it follows from this assumption that the Taylor series expansion
of $f$ around zero converges absolutely for $| s | < R_f\,$ (see for instance \cite[p. 196]{Hi}).
For ease of reference we say that a function $f$ satisfying the above
condition belongs to the class $\Gamma$.

As motivation, let us perform a formal calculation of $f(\partial_t) \phi\,$:
take $f$ in the class $\Gamma$ and $\phi \in L^{\infty}_\omega(\mathbb{R_{+}})$,
and let us write
\[
 f(\partial_t) = \sum_{n=0}^\infty \frac{f^{(n)}(0)}{n!}\, \partial_t^n \; ,
\]
so that if $\phi$ is smooth, standard properties of the Laplace transform \cite{Doetsch,Sc}
yield
\begin{eqnarray}
\mathcal{L} ( f(\partial_t)\phi ) & = &
\sum_{n=0}^\infty \frac{f^{(n)}(0)}{n!}\, \mathcal{L}(\partial_t^n\phi) \nonumber \\
& = & \sum_{n=0}^\infty \frac{f^{(n)}(0)}{n!}\, (s^n \mathcal{L}(\phi) - s^{n-1}\phi(0) -
                             s^{n-2}\phi'(0) - \dots - \phi^{(n-1)}(0) ) \nonumber \\
& = & \sum_{n=0}^\infty \frac{f^{(n)}(0)}{n!}\, \left(s^n \mathcal{L}(\phi) - \sum_{j=1}^n
                              s^{n-j} \phi^{(j-1)}(0) \right) \nonumber \\
& = & f(s) \mathcal{L}(\phi)(s) - \sum_{n=0}^\infty \sum_{j=1}^n \frac{f^{(n)(0)}}{n!}\,
                              s^{n-j} \phi^{(j-1)}(0) \; . \label{aux1}
\end{eqnarray}
If we define the formal series
\begin{equation}  \label{in_con111}
r(s) =  \sum_{n=1}^\infty \sum_{j=1}^n \frac{f^{(n)}(0)}{n!}\,
d_{j-1} \, s^{n-j} \; ,
\end{equation}
in which $d = \{ d_{j} : j \geq 0 \}$ is a sequence of complex numbers, then we can write (\ref{aux1}) as
\[
 \mathcal{L} ( f(\partial_t)\phi ) = f(s) \mathcal{L}(\phi)(s) - r \; ,
\]
where $d_j = \phi^{(j)}(0)$.

Motivated by this computation, in our previous paper \cite{GPR_JMP} we constructed a
functional calculus using directly series of the form (\ref{in_con111}). Specifically, a slightly
improved version of our previous definition run as follows:

Let $f$ be a function belonging to the class $\Gamma$, set
\begin{equation}
\Lambda = \left\{ d = ( d_{j} )_{ j \geq 0 } : \mbox{the series }
r \mbox{ given by } (\ref{in_con111}) \mbox{ converges for
}Re(s) > \omega_f \right\} \; ,
\end{equation}
and let ${\mathcal R} = \{ r : d \in \Lambda \}$.
Then, the subspace $D$ of $L^{\infty}_{\omega_f}(\mathbb{R_{+}}) \times {\mathcal R}$
consisting of all pairs $(\phi , r)$ such that
\begin{equation} \label{tr00}
\widehat{(\phi,r)} = f\,{\mathcal L}(\phi) - r
\end{equation}
belongs to the Widder space $C^\infty_W (\omega_f , \infty )\,$
is a domain for $f(\partial_t)$ with
\begin{equation} \label{tr11}
f(\partial_t)\, (\phi , r) = {\mathcal L}^{- 1} (\,\,
\widehat{(\phi,r)} \,\,) = {\mathcal L}^{- 1} ( f\,{\mathcal L}(\phi)
- r ) \; .
\end{equation}
We note that with this definition, the operator $f(\partial_t)$ is a linear operator
on $D$, a fact on which we did not insist in \cite{GPR_JMP}.

This definition is reasonable, in the sense that it is directly
related to (\ref{aux1}) and it generalizes the standard case of
polynomial functions $f$. Indeed, if $f(s) = s^N$, then
$f(\partial_t)$ can be interpreted simply as $\partial^{N}_t$. In
this case for each sequence $d = \{ d_{j} : j \geq 0 \}$ the
series $r$ given by (\ref{in_con111}) has only $N$ terms and, if
we take a pair $(\phi , r) \in D$ (as defined above) then
Equations (\ref{tr00}) and (\ref{tr1}) yield
\[
{\mathcal L}(f(\partial_t)\, (\phi,r)) = \widehat{(\phi , r)} =  s^N
{\mathcal L}(\phi) - d_0s^{N-1} - d_1 s^{N-2} - \ldots - d_{N-1}  \; ,
\]
while on the other hand, if $\phi$ is a function of class $C^N$, we
have
\[
{\mathcal L}(\partial_t^N\,\phi) =  s^N {\mathcal L}(\phi) -
\phi(0) s^{N-1} - \phi^{(1)}(0) s^{N-2} - \ldots - \phi^{(N-1)}(0)  \; .
\]
Thus, we can make two observations: {\em (a)} Our definition is compatible with our
intuition that is, $f(\partial_t)\, (\phi,r) = \partial^N_t
\phi$, if $d_j = \phi^{(j)}(0)$ for $j= 1,\dots,N$, but in principle other choices of $d$ are possible.
{\em (b)} The ``remainder'' series $r$ given by (\ref{in_con111}) really encodes all information on initial
values of solutions to equations of the form $f(\partial_t)\phi = J$.

Regretfully, it is technically difficult to operate directly with series $r$ in the study of the initial value problem.
This fact, and the observations above, have led us to considering a more abstract approach:

\begin{defi} \label{def0}
 Let $f$ be a function belonging to the class $\Gamma$ and let $\mathcal{R}$ be the
space of analytic functions on $\mathbb C$.
We consider the subspace $D_f$ of $L^{\infty}_{\omega_f}(\mathbb{R_{+}}) \times \mathcal{R}$
consisting of all the pairs $(\phi , r)$ such that
\begin{equation} \label{tr0}
\widehat{(\phi , r)} = f\,{\mathcal L}(\phi) - r
\end{equation}
belongs to the Widder space $C^\infty_W (\omega_f , \infty )\,$.
The domain of $f(\partial_t)$ as a linear operator on $L^{\infty}_{\omega_f}(\mathbb{R_{+}}) \times \mathcal{R}$
is $D_f$. If $(\phi , r) \in D_f$ then
\begin{equation} \label{tr1}
f(\partial_t)\,(\phi , r) = {\mathcal L}^{- 1} (\,\,\widehat{(\phi,r)} \,\,) =
{\mathcal L}^{- 1} ( f\,{\mathcal L}(\phi)
- r ) \; .
\end{equation}
\end{defi}

Equipped with Definition \ref{def0}, we now study
Equation (\ref{lor_linear}), generalizing our work reported in \cite{GPR_JMP}.
Before doing so, however, let us reconsider the example given after Equation (\ref{tr11}). We would
like to precise the intuitions appearing in the physics literature \cite{B,BBC,BK1,BK2,EW, Moeller}
where operators of the form $f(\partial_t)$ are defined (for appropriate functions $f$) by power
series in $\partial_t$, that is, as ``operators in an infinite number of derivatives''.
Following our previous papers \cite{GPR_JMP,GPR_JDE}, we point out that this idea
can be formalized using a natural extension of the classical theory of analytic vectors \cite{Nelson}:

\begin{defi} \label{av}
Let $A$ be a linear operator on a Banach space $X$, and let $f$ be
a complex-valued function such that $f^{(n)}(0)$ exists for all $n
\geq 0$. We say that $\phi \in X$ is a $f$-analytic vector for $A$
if $\phi$ is in the domain of $A^n$ for all $n \geq 0$ and the series
\[
\sum_{n = 0}^\infty \; \frac{f^{(n)}(0)}{n!}\, A^n(\phi)
\]
defines a vector in $X$.
\end{defi}

We observe that $f$-analytic vectors exist:

\begin{lem}
Fix a function $f$ as in Definition $\ref{av}$. We have:

(a) If $\psi$ is a polynomial function on $\mathbb{R_{+}}$, $\psi$ is an $f$-analytic vector for
$\partial_t$ on $L^{\infty}_\omega(\mathbb{R_{+}})\,$.

(b) If $\psi$ is a $C^\infty$ function on $\mathbb{R_{+}}$ such that
$| e^{-t\omega} \psi^{(n)}(t) | \leq M\, t^n$ for $t < R_f$
and $\psi^{(n)}(t) = 0$ for $t \geq R_f$, then $\psi$ is an analytic $f$-vector for the operator
$\partial_t\,$ on $L^{\infty}_\omega(\mathbb{R_{+}})\,$.
\end{lem}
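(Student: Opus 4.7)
My plan is to verify the two clauses of Definition~\ref{av} separately by working directly with the pointwise estimates supplied by the hypotheses, reducing $f$-analyticity to the absolute convergence of a well-understood power series.

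For (a), I would observe that a polynomial $\psi$ of degree $N$ satisfies $\partial_t^n \psi \equiv 0$ for every $n > N$, so the series defining $f$-analyticity collapses to the finite sum $\sum_{n=0}^{N}\frac{f^{(n)}(0)}{n!}\,\psi^{(n)}$. Each $\psi^{(n)}$ is again a polynomial, and hence lies in $L^{\infty}_\omega(\mathbb{R_{+}})$ provided $\omega > 0$ (the weight $e^{-\omega t}$ dominates polynomial growth); the finite linear combination then also lies in $L^{\infty}_\omega(\mathbb{R_{+}})$. This simultaneously shows that $\psi$ belongs to the domain of every $\partial_t^n$ and that the defining series converges in this space.

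For (b), I would first check that $\psi^{(n)} \in L^{\infty}_\omega(\mathbb{R_{+}})$ for every $n$: since $\psi$ is $C^\infty$ and, by hypothesis, $|e^{-\omega t}\psi^{(n)}(t)| \le M t^n$ on $[0, R_f)$ while $\psi^{(n)} \equiv 0$ on $[R_f, \infty)$, each derivative has norm at most $M R_f^n$. Next, I would analyse the partial sums
\[
S_N(t) = \sum_{n=0}^{N}\frac{f^{(n)}(0)}{n!}\,\psi^{(n)}(t),
\]
which vanish identically on $[R_f, \infty)$ and admit the pointwise estimate
\[
|e^{-\omega t}\,S_N(t)| \le M \sum_{n=0}^{N}\frac{|f^{(n)}(0)|}{n!}\,t^n \qquad \text{for } t \in [0, R_f).
\]
Since $f \in \Gamma$ ensures absolute convergence of the Taylor series of $f$ on $|s| < R_f$, a tail-by-tail comparison shows that $\{S_N\}$ is Cauchy in $L^{\infty}_\omega(\mathbb{R_{+}})$, and its limit is the desired $f$-analytic vector.

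The main obstacle I anticipate is the delicate behavior of $\sum_n \frac{|f^{(n)}(0)|}{n!}\, R_f^n$ at the boundary of the disk of analyticity, since the membership $f \in \Gamma$ gives absolute convergence only on the \emph{open} disk $|s| < R_f$. I would handle this by exploiting the $C^\infty$-regularity of $\psi$ together with the vanishing on $[R_f, \infty)$, which forces $\psi^{(n)}(t) \to 0$ as $t \to R_f^-$, so that the essential supremum of $|e^{-\omega t}\psi^{(n)}(t)|$ is actually attained strictly inside $(0, R_f)$; a dominated-convergence argument on $[0, r]$ with $r \uparrow R_f$ then furnishes the needed summability, at which point the Cauchy estimate for $\{S_N\}$ closes the argument.
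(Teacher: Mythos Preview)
Your approach coincides with the paper's: part (a) is dismissed there as ``trivial'', and for part (b) the paper writes only the single chain
\[
\left|e^{-\omega t}\sum_{n=0}^{\infty}\frac{f^{(n)}(0)}{n!}\,\partial_t^n\psi(t)\right|
\;\le\;\sum_{n=0}^{\infty}\left|\frac{f^{(n)}(0)}{n!}\right|\,|e^{-\omega t}\partial_t^n\psi(t)|
\;\le\; M\sum_{n=0}^{\infty}\left|\frac{f^{(n)}(0)}{n!}\right|\,|t|^n
\]
and declares the right-hand side finite ``by hypothesis''. You are in fact more careful than the paper --- checking that each $\psi^{(n)}\in L^\infty_\omega$, phrasing convergence via Cauchy partial sums in norm, and flagging the boundary behaviour as $t\to R_f^-$ --- none of which the paper addresses; it simply treats pointwise finiteness of the majorant on $|t|<R_f$ as sufficient.
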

\begin{proof}
Part (a) is trivial. For (b) we simply note that
\[
\left|e^{-\omega\, t} \sum_{n=0}^\infty \frac{f^{(n)}(0)}{n!} \,
\partial^n_t\,\psi(t) \right|
\leq \sum_{n=0}^\infty \left| \frac{f^{(n)}(0)}{n!} \right| \, |
e^{-\omega\, t}
\partial^n_t\,\psi(t) | \leq \sum_{n=0}^\infty \left|
\frac{f^{(n)}(0)}{n!} \right| \, |t|^n M
\]
and the right hand side of this inequality is finite by
hypothesis.
\end{proof}

We also have the following proposition, connecting $f$-analytic vectors with Definition \ref{def0}:

\begin{prop} \label{bk222}
Assume that $f$ is entire, let $\phi \in L^{\infty}_\omega
(\mathbb{R_{+}})$ be a smooth $f$-analytic vector for $\partial_t$, and
consider the series $r$ given by $(\ref{in_con111})$ with $d_j = \phi^{(j)}(0)$. We
assume that
\begin{equation} \label{marcus}
 | d_j | \leq C\, R^j \; ,
\end{equation}
in which $0 < R < 1$. Then, $(\phi , r) \in D_f$.
\end{prop}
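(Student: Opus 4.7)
The plan is to verify directly the two defining conditions of $D_f$ in Definition \ref{def0}: namely that $r$ is analytic on $\mathbb{C}$, and that $\widehat{(\phi,r)} = f\,\mathcal{L}(\phi) - r$ lies in the Widder space $C^\infty_W(\omega_f,\infty)$.

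First I would handle analyticity of $r$ by a straightforward absolute estimate. Interchanging summation in (\ref{in_con111}) and using $|d_{j-1}|\le CR^{j-1}$, for any $|s|\le M$ one has
$$\left|\sum_{j=1}^n d_{j-1}\,s^{n-j}\right| \le C\sum_{j=1}^n R^{j-1}M^{n-j} \le C\,n\,\max(R,M)^{n-1},$$
hence
$$|r(s)| \le C\sum_{n=1}^\infty \frac{|f^{(n)}(0)|}{(n-1)!}\,\max(R,M)^{n-1}.$$
Because $f$ is entire, the majorant on the right (essentially the entire function obtained by taking absolute values of the Taylor coefficients of $f'$) converges for every $M$. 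The double series therefore converges absolutely and uniformly on compact subsets of $\mathbb{C}$, so $r$ is entire and belongs to $\mathcal{R}$.

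Next I would exploit the $f$-analytic vector hypothesis to identify $f\mathcal{L}(\phi)-r$ as the Laplace transform of an element of $L^\infty_{\omega_f}(\mathbb{R}_+)$. By definition,
$$\Phi := \sum_{n=0}^\infty \frac{f^{(n)}(0)}{n!}\,\partial_t^n\phi$$
converges in $L^\infty_\omega(\mathbb{R}_+)\subseteq L^\infty_{\omega_f}(\mathbb{R}_+)$. Since the Laplace transform is an isometric isomorphism onto the Widder space, it is continuous and
$$\mathcal{L}(\Phi)(s) = \sum_{n=0}^\infty \frac{f^{(n)}(0)}{n!}\,\mathcal{L}(\partial_t^n\phi)(s) \in C^\infty_W(\omega_f,\infty).$$
Smoothness of $\phi$ together with $\partial_t^k\phi\in L^\infty_\omega$ justifies the classical identity
$$\mathcal{L}(\partial_t^n\phi)(s) = s^n\mathcal{L}(\phi)(s) - \sum_{j=1}^n s^{n-j}\phi^{(j-1)}(0)$$
for $\mathrm{Re}(s)>\omega$. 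Substituting, splitting the sum into its two pieces (both of which converge absolutely: the first by entirety of $f$, the second by the estimate used in the first step), and using $d_{j-1}=\phi^{(j-1)}(0)$ yields
$$\mathcal{L}(\Phi)(s) = f(s)\mathcal{L}(\phi)(s) - r(s) = \widehat{(\phi,r)}(s),$$
so $\widehat{(\phi,r)}\in C^\infty_W(\omega_f,\infty)$ and $(\phi,r)\in D_f$.

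The main obstacle is rigorously justifying the interchange of the two infinite summations. The bound $|d_j|\le CR^j$ is precisely what makes the inner sum over $j$ grow only like $n\max(R,M)^{n-1}$, which is then absorbed by the rapid decay of $|f^{(n)}(0)|/n!$ coming from entirety of $f$; this is the same estimate that gives analyticity of $r$, and it is what allows the rearrangement after the Laplace rule is applied term by term. Beyond that, the argument is just the formal calculation (\ref{aux1}) made rigorous using continuity of $\mathcal{L}$.
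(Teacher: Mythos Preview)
Your proof is correct and self-contained. The paper's own proof is much terser and defers both steps to results in an earlier paper by the same authors: it notes that condition~(\ref{marcus}) makes $\sum_{j\ge 1} d_{j-1}/s^j$ converge absolutely for $|s|>R$, then invokes Lemma~2.1 of \cite{GPR_JMP} to conclude that $r$ is entire, and finally invokes Proposition~2.1 of \cite{GPR_JMP} to obtain $(\phi,r)\in D_f$.

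Your argument and the paper's are close in spirit but differ in execution. For entirety of $r$, the paper goes through the auxiliary generating series $\sum d_{j-1}/s^j$ and an external lemma, while you bound the double sum directly by $C\,n\max(R,M)^{n-1}$ and use the entirety of $f$; your route is more elementary and avoids the detour. For the Widder-space membership, you explicitly identify $f\mathcal{L}(\phi)-r$ with $\mathcal{L}(\Phi)$, where $\Phi$ is the $L^\infty_\omega$-limit guaranteed by the $f$-analytic vector hypothesis, and justify the rearrangement of sums via the same absolute estimate; this is presumably what Proposition~2.1 of \cite{GPR_JMP} encapsulates, but your version makes the role of the hypothesis transparent. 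The trade-off is length for clarity: the paper's three-line proof is opaque without access to \cite{GPR_JMP}, while yours stands on its own.
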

\begin{proof}
Condition (\ref{marcus}) implies that the series $\sum d_{j-1}/s^j$ converges absolutely for
$|s| > R$. It follows from Lemma 2.1 in \cite{GPR_JMP} that the series $r$ is in fact an
entire function, and then Proposition 2.1 of \cite{GPR_JMP} allow us to conclude that
$(\phi , r) \in D_f$.
\end{proof}

We interpret this proposition as stating that, {\em on smooth
$f$-analytic vectors}, the operator $f(\partial_t)$ can indeed be
rigorously understood as an operator in an infinite number of
derivatives, and that this fact is consistent with our Definition
\ref{def0}.

\subsection{Linear nonlocal equations}

In this subsection we solve the nonlocal equation
\begin{equation} \label{lin_gen_0}
f(\partial_t)(\phi , r) = J \;
\end{equation}
using the Lorentzian functional calculus developed above. From now on we will assume
that a function $r \in \mathcal{R}$ has been fixed. We understand Equation (\ref{lin_gen_0}) as an
equation for $\phi \in L^{\infty}_{\omega_f} (\mathbb{R_{+}})$ such that $(\phi , r) \in D_f$.
We simply write $f(\partial_t)\phi = J$ instead of (\ref{lin_gen_0}). First of all, we formalize
what we mean by a solution:

\begin{defi} \label{main_def}
 Let us fix a function $r \in \mathcal{R}$. We say that $\phi \in L^{\infty}_{\omega_f} (\mathbb{R_{+}})$
is a solution to Equation $f(\partial_t)\phi = J$ if and only if
\begin{enumerate}
 \item $\widehat{\phi} = f\,\mathcal{L}(\phi) - r \in C^\infty_W (\omega_f , \infty)\,$; $($ i.e., $(\phi,r) \in D_f$ $)$;
 \item $f(\partial_t)(\phi) = \mathcal{L}^{-1}(f\,\mathcal{L}(\phi) - r) = J\,$.
\end{enumerate}
\end{defi}

Our main theorem on existence and uniqueness of solutions to the linear problem (\ref{lin_gen_0})
is:

\begin{tw} \label{main_thm}
Let us fix a function $f$ in $\Gamma$ and a function $J \in
L^{\infty}_{\omega_f} (\mathbb{R_{+}})$. We assume that the function
$({\cal L}( J ) + r)/f$ is in the Widder space $C^\infty_W (\omega_f , \infty)$. Then, the linear
equation
\begin{equation} \label{lin_gen}
f(\partial_t)\phi = J
\end{equation}
can be uniquely solved for $\phi \in L^{\infty}_\omega (\mathbb{R_{+}})$.
The solution is given by the explicit formula
\begin{equation} \label{sol_lor}
\phi = {\cal L}^{-1} \left( \frac{{\cal L}( J ) + r}{f} \, \right)
\; .
\end{equation}
\end{tw}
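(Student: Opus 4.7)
The proof is essentially a one-line consequence of the fact that $\mathcal{L}: L^{\infty}_{\omega_f}(\mathbb{R}_+) \to C^{\infty}_W(\omega_f,\infty)$ is an isometric isomorphism (cited from \cite{ABHN}), so my plan is to translate everything into the Widder side, verify the two conditions of Definition \ref{main_def}, and use injectivity of $\mathcal{L}$ to get uniqueness.

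\textbf{Existence.} I would first define $\phi$ by the explicit formula $(\ref{sol_lor})$. By hypothesis, the function $g := (\mathcal{L}(J) + r)/f$ lies in $C^{\infty}_W(\omega_f,\infty)$, so the isomorphism provides a unique $\phi \in L^{\infty}_{\omega_f}(\mathbb{R}_+)$ with $\mathcal{L}(\phi) = g$. I would then check condition (1) of Definition \ref{main_def}: rearranging the identity $\mathcal{L}(\phi) = (\mathcal{L}(J) + r)/f$ yields
\[
\widehat{(\phi,r)} \;=\; f\,\mathcal{L}(\phi) - r \;=\; \mathcal{L}(J),
\]
and $\mathcal{L}(J) \in C^{\infty}_W(\omega_f,\infty)$ because $J \in L^{\infty}_{\omega_f}(\mathbb{R}_+)$. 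Hence $(\phi,r) \in D_f$. Condition (2) is now immediate from the definition of $f(\partial_t)$:
\[
f(\partial_t)(\phi) \;=\; \mathcal{L}^{-1}\bigl(f\,\mathcal{L}(\phi) - r\bigr) \;=\; \mathcal{L}^{-1}\bigl(\mathcal{L}(J)\bigr) \;=\; J.
\]

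\textbf{Uniqueness.} Suppose $\phi_1,\phi_2 \in L^{\infty}_{\omega_f}(\mathbb{R}_+)$ are two solutions for the same fixed $r$. By condition (2) of Definition \ref{main_def} we have $\mathcal{L}^{-1}(f\mathcal{L}(\phi_i) - r) = J$ for $i=1,2$; applying $\mathcal{L}$ (well defined on each side as elements of $C^{\infty}_W$) and subtracting yields
\[
f(s)\bigl(\mathcal{L}(\phi_1)(s) - \mathcal{L}(\phi_2)(s)\bigr) \;=\; 0 \qquad \text{for } \operatorname{Re}(s) > \omega_f.
\]
Since $f$ is analytic on $\{\operatorname{Re}(s) > \omega_f\}$ and not identically zero (otherwise the theorem is vacuous), its zeros form a discrete set in that half-plane, so the analytic function $\mathcal{L}(\phi_1) - \mathcal{L}(\phi_2)$ vanishes on an open dense subset and therefore everywhere on $\{\operatorname{Re}(s)>\omega_f\}$. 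Injectivity of $\mathcal{L}$ then gives $\phi_1 = \phi_2$.

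\textbf{Main obstacle.} None of the steps is delicate in isolation, but the one subtlety worth being careful about is dividing by $f$ in the uniqueness argument: a priori $f$ may have zeros in $\{\operatorname{Re}(s)>\omega_f\}$, so I cannot just ``cancel'' $f$. The right move, as above, is to exploit analyticity of both $f$ and the difference of Laplace transforms to propagate the vanishing off the zero set of $f$. Everything else is a routine unwinding of Definition \ref{def0} and Definition \ref{main_def} through the Laplace isomorphism.
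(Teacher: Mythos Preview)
Your proof is correct and follows the same route as the paper. The existence argument is identical (define $\phi$ via the formula, compute $\widehat{(\phi,r)}=\mathcal{L}(J)$, verify both conditions of Definition~\ref{main_def}); for uniqueness the paper's proof simply invokes the Laplace isomorphism and then, in a Remark immediately afterwards, spells out essentially your analyticity argument---phrased contrapositively (if $\mathcal{L}(\phi_1-\phi_2)$ were nonzero on some open set then $f$ would vanish there and hence everywhere)---to handle possible zeros of $f$.
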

\begin{proof}
Since $J \in L^{\infty}_\omega (\mathbb{R_{+}})$, it
follows that $(\phi , r)$, where  $\phi = {\cal L}^{-1} \left(({\cal L}( J ) + r)/f\,\right)\,$,
is in the domain $D_f$ of the operator $f(\partial_t)$: indeed, an easy calculation shows that
$\widehat{\phi}={\cal L}( J )$, which is an element of $C^\infty_W
(\omega_f , \infty)$.  We can then check directly that $\phi$
defined by (\ref{sol_lor}) is a solution of
(\ref{lin_gen}). The isomorphism between
$L^{\infty}_\omega (\mathbb{R_{+}})$ and $C^\infty_W (\omega_f ,
\infty)$ given by the Laplace transform implies uniqueness.
\end{proof}

\begin{rem}
{\rm
We can prove uniqueness using only Definition \ref{main_def} as follows: let us assume that $\phi$ and
$\psi$ are solutions to Equation (\ref{lin_gen}). Then, item 2 of  Definition \ref{main_def} implies
$f\,\mathcal{L}(\phi - \psi) =0$. Set $h = \mathcal{L}(\phi - \psi)$ and suppose that $h(z_0) \neq 0$. By
analyticity, $h(z) \neq 0$ in a neighborhood $U$ of $z_0$. But then $f=0$ in $U$, so that (again by
analyticity) $f$ is identically zero. }
\end{rem}

Theorem \ref{main_thm} can be considered as an abstract version of the
Carmichael theorem on solutions to (\ref{lin_gen}) as it appears
in \cite{Car,BK1}. It is important to notice that the solution
(\ref{sol_lor}) {\em does not need to be either analytic or even
differentiable}: at this stage all we know is that (\ref{sol_lor})
is an element of $L^{\infty}_{\omega_f} (\mathbb{R_{+}})$, so that
in complete generality we cannot even formulate an initial value
problem. In Section 3 we impose further conditions on $J$ and $f$
which assure us that (\ref{sol_lor}) is smooth at $s = 0$, and we
use them to study the initial value problem for (\ref{lin_gen}). Previous
discussions on the initial value problem appearing
in the physics literature are, for instance, \cite{B,BBC,BK1,EW,Moeller}.

We recall that we have fixed an analytic function $r$. Now we
assume the growth condition
\begin{equation}
\Big|\frac{r(s)}{f(s)}\Big|\leq \displaystyle\frac{C}{|s|^p}
\label{growth0}
\end{equation}
for all $|s|$ sufficiently large and some $p > 0$, and we examine
three special cases of Theorem \ref{main_thm}: (a) the function
$r/f$ has no poles; (b) the function $r/f$ has a finite number of
poles; (c) the function $r/f$ has an infinite number of poles.

\begin{corollary} \label{ic}
Assume that the hypotheses of Theorem $\ref{main_thm}$ hold, that
${\cal L}( J )/f$ is in $C^\infty_W(\omega_f , \infty)$, and that
$r/f$ is an entire function such that $(\ref{growth0})$ holds.
Then, solution $(\ref{sol_lor})$ to Equation $(\ref{lin_gen})$ is
simply $\phi = {\cal L}^{-1} \left( \frac{{\cal L}( J )}{f} \,
\right)$. In addition, if $\,1/f$ belongs to $C^\infty_W (\omega_f
, \infty)$, then the solution $(\ref{sol_lor})$ can be written as
a convolution, $\phi = q \ast J$, in which $q$ is uniquely
determined by ${\cal L}(q) = 1/f$.
\end{corollary}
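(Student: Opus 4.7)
My plan is to simplify the formula (\ref{sol_lor}) of Theorem \ref{main_thm} in two stages. First I would argue that under the present hypotheses the correction term $r$ drops out entirely, reducing the solution to $\phi=\mathcal{L}^{-1}(\mathcal{L}(J)/f)$. Then, with the extra assumption $1/f\in C^\infty_W(\omega_f,\infty)$, I would apply the Laplace convolution theorem to re-express this inverse transform as a convolution $q\ast J$.

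The first stage should be essentially a Liouville argument. By hypothesis $r/f$ is entire, and the growth condition (\ref{growth0}) gives $|r(s)/f(s)|\le C/|s|^{p}$ for $|s|$ sufficiently large; combining this with continuity on any fixed closed disc yields a uniform bound on all of $\mathbb{C}$. Liouville's theorem then forces $r/f$ to be a constant, and the $|s|^{-p}$ decay forces that constant to be zero. Hence $r/f\equiv 0$ and the term $r$ contributes nothing to (\ref{sol_lor}). The hypothesis $\mathcal{L}(J)/f\in C^\infty_W(\omega_f,\infty)$, together with the isometric isomorphism between $L^{\infty}_{\omega_f}(\mathbb{R_{+}})$ and $C^\infty_W(\omega_f,\infty)$, guarantees that the remaining inverse Laplace transform is a well-defined element of $L^{\infty}_{\omega_f}(\mathbb{R_{+}})$.

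For the second stage, the same isomorphism applied to $1/f$ produces a unique $q\in L^{\infty}_{\omega_f}(\mathbb{R_{+}})$ with $\mathcal{L}(q)=1/f$. I would then compute $\mathcal{L}(q\ast J)=\mathcal{L}(q)\,\mathcal{L}(J)=\mathcal{L}(J)/f=\mathcal{L}(\phi)$ and invoke injectivity of the Laplace transform on exponentially bounded functions to identify $\phi=q\ast J$. The main technical obstacle I expect is that the convolution of two elements of $L^{\infty}_{\omega_f}(\mathbb{R_{+}})$ is not a priori in $L^{\infty}_{\omega_f}(\mathbb{R_{+}})$ itself: a direct estimate gives $|q\ast J(t)|\le \|q\|_{\omega_f,\infty}\|J\|_{\omega_f,\infty}\,t\,e^{\omega_f t}$, so $q\ast J$ only sits in $L^{\infty}_{\omega_f+\varepsilon}(\mathbb{R_{+}})$ for arbitrary $\varepsilon>0$. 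This is harmless, however, since its Laplace transform still converges on $\mathrm{Re}(s)>\omega_f$, the convolution theorem still applies on that half-plane, and uniqueness of the Laplace transform then completes the identification of $q\ast J$ with $\phi$.
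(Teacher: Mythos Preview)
Your argument is correct and follows the paper's own proof essentially verbatim: the paper also uses boundedness at infinity from (\ref{growth0}) plus continuity on a compact disc to invoke Liouville's theorem and conclude $r/f\equiv 0$, then simply cites formula (\ref{sol_lor}). In fact you go further than the paper, which does not supply any argument for the convolution statement; your treatment via the isomorphism $\mathcal{L}:L^\infty_{\omega_f}(\mathbb{R}_+)\to C^\infty_W(\omega_f,\infty)$ and the convolution theorem is sound, and your observation that $q\ast J$ a priori lies only in $L^\infty_{\omega_f+\varepsilon}(\mathbb{R}_+)$ is a genuine subtlety that the paper glosses over and that you handle correctly.
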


\begin{proof}
The growth condition (\ref{growth0}) implies that there exists $M
> 0$ such that $\displaystyle \Big|\frac{r(s)}{f(s)}\Big|$ is bounded for $|s|>
M$. On the other hand, the function $\displaystyle
 \Big|\frac{r(s)}{f(s)}\Big|$ is bounded for $|s| \leq M$ simply by
continuity. Thus, $r(s)/f(s)$ is an entire function with bounded
module, and therefore $r(s)/f(s) = C_0$ for some constant complex
number. But then (\ref{growth0}) implies that $C_0 =0$, and the
result follows from the general formula (\ref{sol_lor}).
\end{proof}

\begin{corollary} \label{ic1}
Assume that  the hypotheses of\,\, Theorem $\ref{main_thm}$ hold,
and that $\displaystyle \frac{r}{f}$\, has a finite number of
poles $\omega_i$ ($i=1, \dots, N$) of order $\,r_i$ to the left of
$Re(s)=\omega_f$. Suppose also that ${\cal L}( J )/f$ is in
$C^\infty_W(\omega_f , \infty)$, and that the growth condition
$\displaystyle\Big|\frac{r(s)}{f(s)}\Big|\leq
\displaystyle\frac{C}{|s|^p}$\, holds for all $|s|$ sufficiently
large and some $p > 0$. Then, the solution $(\ref{sol_lor})$ can
be written in the form
\begin{equation} \label{car}
\phi(t) = \frac{1}{2 \pi i} \int_{\omega_f - i \infty}^{\omega_f +
i\infty} e^{s\, t} \left( \frac{{\cal L}( J )}{f} \right) \, ds +
\sum_{i=1}^N P_i(t)\,e^{\omega_i t} \; ,
\end{equation}
in which $P_i(t)$ are polynomials of degree $\,r_i - 1$.
\end{corollary}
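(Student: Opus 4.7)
The plan is to start from the explicit formula (\ref{sol_lor}) furnished by Theorem \ref{main_thm}, and split the inverse Laplace transform by linearity as
\[
\phi = {\mathcal L}^{-1}\!\left(\frac{{\mathcal L}(J)}{f}\right) + {\mathcal L}^{-1}\!\left(\frac{r}{f}\right).
\]
The first summand is handled immediately by the classical complex inversion formula for the Laplace transform: since by hypothesis ${\mathcal L}(J)/f$ lies in the Widder space $C^\infty_W(\omega_f,\infty)$, its preimage under ${\mathcal L}$ is represented by the Bromwich integral on the vertical line $Re(s)=\omega_f$, which is exactly the first term of (\ref{car}). So the entire content of the corollary reduces to identifying ${\mathcal L}^{-1}(r/f)$ with the finite sum $\sum_{i=1}^N P_i(t)\,e^{\omega_i t}$.

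To do this, I would write ${\mathcal L}^{-1}(r/f)$ once again as a Bromwich integral on $Re(s)=\omega_f$ and, for each $T>0$ large, close the contour to the left by a semicircle $C_T$ of radius $T$ centered at $\omega_f$. By the hypothesis on the pole structure, all of $\omega_1,\dots,\omega_N$ lie inside the resulting closed contour for $T$ sufficiently large, and $r/f$ is analytic everywhere else in the enclosed region. The residue theorem then gives
\[
\frac{1}{2\pi i}\oint e^{st}\,\frac{r(s)}{f(s)}\,ds \;=\; \sum_{i=1}^{N} \operatorname{Res}_{s=\omega_i}\!\left(e^{st}\,\frac{r(s)}{f(s)}\right),
\]
and a direct computation of the residue at a pole of order $r_i$ (differentiating $(s-\omega_i)^{r_i}\,e^{st}\,r(s)/f(s)$ up to order $r_i-1$ and evaluating at $\omega_i$) produces precisely a term of the form $P_i(t)\,e^{\omega_i t}$ with $\deg P_i = r_i-1$.

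The main technical step, and the place where the growth condition $|r(s)/f(s)|\le C/|s|^p$ is used, is showing that the contribution of the left semicircle $C_T$ tends to $0$ as $T\to\infty$ when $t>0$. Parametrizing $s=\omega_f + Te^{i\theta}$ for $\theta\in[\pi/2,3\pi/2]$ one has $|e^{st}|=e^{t\omega_f}\,e^{tT\cos\theta}$, and since $\cos\theta\le 0$ on this arc (strictly negative away from the endpoints) the exponential factor is uniformly bounded and in fact decays on most of the arc. Combined with the polynomial decay $|r(s)/f(s)|=O(T^{-p})$ one bounds the arc integral by $C\,T^{1-p}\int_{\pi/2}^{3\pi/2} e^{tT\cos\theta}\,d\theta$, which is handled by a standard Jordan-type estimate and vanishes in the limit. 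This is the only delicate estimate; the rest is bookkeeping with residues.

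Putting the two pieces back together yields formula (\ref{car}), and linearity together with the uniqueness assertion in Theorem \ref{main_thm} guarantees that the representation obtained is the solution. I expect the semicircle estimate to be the main obstacle, mainly because it requires a little care at the endpoints $\theta=\pi/2,3\pi/2$ of the arc where $\cos\theta$ vanishes; everywhere else the argument is routine residue calculus applied to the general formula (\ref{sol_lor}).
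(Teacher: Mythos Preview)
Your proposal is correct and follows essentially the same route as the paper: split the inverse Laplace transform by linearity into the $\mathcal{L}(J)/f$ piece (the Bromwich integral) and the $r/f$ piece, then evaluate the latter by closing the contour to the left and invoking the growth condition to kill the arc contribution, picking up the residues $P_i(t)e^{\omega_i t}$ at the poles $\omega_i$. The only cosmetic differences are that the paper first remarks explicitly that $r/f\in C^\infty_W(\omega_f,\infty)$ (as the difference of two Widder-space functions) before writing its Bromwich integral, cites Doetsch for the arc estimate rather than sketching the Jordan-type bound, and computes the residues via the Laurent coefficients $a_{k,i}$ rather than by the differentiation formula; none of this changes the substance.
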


\begin{proof}
We first notice that the quotient $r/f\in C^\infty_W(\omega_f ,
\infty)$, since $({\cal L}( J ) + r)/f$ and ${\cal L}( J )/f$
are in $C^\infty_W(\omega_f , \infty)$, by assumption. We compute
using the general solution (\ref{sol_lor}) and the inversion
formula for the Laplace transform \cite{Doetsch,Sc}:
\begin{equation}\label{inversion}
\phi(t) =
 \frac{1}{2 \pi
i} \int_{\omega_f - i \infty}^{\omega_f + i\infty} e^{s\, t} \frac{{\cal
L}( J )}{f} ds  + \frac{1}{2 \pi i} \int_{\omega_f - i
\infty}^{\omega_f + i\infty} e^{s\, t} \frac{r}{f} \, ds \; .
\end{equation}
Our growth condition on $r/f$ assures us that the assumptions for
the evaluation of the inversion formula via the calculus of
residues,  \cite[Section 26]{Doetsch}, are verified. Thus, we can
calculate the last integral in the right hand side of
(\ref{inversion}) as
$$
\frac{1}{2 \pi i} \int_{\omega_f - i \infty}^{\omega_f + i\infty} e^{s\,
t} \frac{r}{f} \, ds=\sum_{i=1}^{N}res_{\,i}(t),
$$
in which $res_{\,i}(t)$ denotes the residue of $\displaystyle
\frac{r(s)}{f(s)}$ at $\omega_i$. In order to compute
$res_{\,i}(t)$ we consider the Laurent
expansion around the pole $\omega_i$, that is
\begin{equation} \label{pole-1}
\frac{r(s)}{f(s)}=\frac{a_{1,i}}{(s-\omega_i)}
+\frac{a_{2,i}}{(s-\omega_i)^{2}}+\cdots+\frac{a_{r_i,i}}{(s-\omega_i)^{r_{i}}}+h_i(s)
\end{equation}
where $h_i$ is an analytic function inside a closed curve around
$\omega_i\,$. We multiply (\ref{pole-1}) by $e^{ts}/2\pi i$ and use
Cauchy's integral formula. We obtain (cf. \cite[{\em loc.
cit.}]{Doetsch})
\[
res_{\,i}(t) = P_i(t)\;e^{\omega_i t} \; ,
\]
where $P_i(t)$ is the polynomial of degree $r_i - 1$ given by
\begin{equation}
 P_i(t) = a_{1,i} + a_{2,i}\, \frac{t}{1!} + \cdots + a_{r_i,i}\, \frac{t^{r_i - 1}}{(r_i -1)!} \; .
\end{equation}
\end{proof}

The proof of this corollary is essentially in our previous paper
\cite{GPR_JMP} in the case $f$ entire. We present it here again
because the formula (\ref{car}) for the solution $\phi(t)$ is
precisely Carmichael's formula appearing in \cite{Car}, as quoted
in \cite{BK1}, and also because explicit formulae such as
(\ref{car}) are crucial for the study of the initial value problem
we carry out in Section 3.

The last special case of Theorem \ref{main_thm} is when the
function $r(s)/f(s)$ has an infinite number of isolated poles
$\omega_i$ located to the left of $Re(s) = \omega_f$, and such that
$|\omega_0| \leq |\omega_1| \leq |\omega_2| \leq \cdots$.

\begin{corollary} \label{ic2}
Assume that  the hypotheses of\,\, Theorem $\ref{main_thm}$ hold,
and that $\displaystyle \frac{r}{f}$\, has an infinite number of
poles $\omega_i$ of order $\,r_i$ to the left of
$Re(s)=\omega_f\,$ satisfying $|\omega_i| \leq |\omega_{i+1}|$
for\, $i \geq 0$. We let $\sigma_i$ be curves in the half-plane
$Re(s) \leq \omega_f$ connecting the points $\omega_f + i
\omega_n$ and $\omega_f - i \omega_n\,$, such that $\sigma_n$
together with the segment of the line $Re(s) = \omega_f$ between
these two points encloses exactly the first $n$ poles of
$r(s)/f(s)$. Suppose that ${\cal L}( J )/f$ is in
$C^\infty_W(\omega_f , \infty)$, that the curves $\sigma_n$ are
chosen so that $\omega_n$ tends to infinity as $n$ tends to
infinity, and that
\[
\lim_{n \rightarrow \infty} \int_{\sigma_n} e^{ts} \frac{r(s)}{f(s)} ds = 0 \; .
\]
Then, the solution $(\ref{sol_lor})$ to the linear equation $(\ref{lin_gen})$ can be
written in the form
\begin{equation} \label{car-1}
\phi(t) = \frac{1}{2 \pi i} \int_{\omega_f - i \infty}^{\omega_f +
i\infty} e^{s\, t} \left( \frac{{\cal L}( J )}{f} \right) \, ds +
\sum_{i=1}^\infty P_i(t)\,e^{\omega_i t} \; ,
\end{equation}
in which $P_i(t)$ are polynomials of degree $\,r_i - 1$.
\end{corollary}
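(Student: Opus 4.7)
The plan is to reduce the proof to the finite-pole case of Corollary \ref{ic1} by a contour-deformation/limiting argument. By Theorem \ref{main_thm}, since the hypothesis $\mathcal{L}(J)/f \in C^\infty_W(\omega_f,\infty)$ together with $(\mathcal{L}(J)+r)/f \in C^\infty_W(\omega_f,\infty)$ forces $r/f \in C^\infty_W(\omega_f,\infty)$ as well, the solution $\phi$ can be written via the inversion formula for the Laplace transform as
\[
\phi(t) = \frac{1}{2\pi i}\int_{\omega_f - i\infty}^{\omega_f + i\infty} e^{st}\,\frac{\mathcal{L}(J)(s)}{f(s)}\,ds \;+\; \frac{1}{2\pi i}\int_{\omega_f - i\infty}^{\omega_f + i\infty} e^{st}\,\frac{r(s)}{f(s)}\,ds .
\]
The first integral already matches the first term in \eqref{car-1}, so the task reduces to evaluating the second integral as the infinite series $\sum_{i=1}^{\infty} P_i(t)e^{\omega_i t}$.

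For the second integral, I would fix $n$ and consider the closed, positively oriented contour $C_n$ consisting of the vertical segment from $\omega_f - i\omega_n$ to $\omega_f + i\omega_n$ together with the curve $\sigma_n$ as specified in the statement, which encloses exactly the first $n$ poles $\omega_1, \dots, \omega_n$ of $r/f$. By the residue theorem applied to the meromorphic function $e^{st}r(s)/f(s)$,
\[
\frac{1}{2\pi i}\oint_{C_n} e^{st}\,\frac{r(s)}{f(s)}\,ds \;=\; \sum_{i=1}^{n} \operatorname{res}_i(t),
\]
where each residue is computed exactly as in the proof of Corollary \ref{ic1}: expand $r/f$ in its Laurent series \eqref{pole-1} about $\omega_i$, multiply by $e^{st}/(2\pi i)$, and apply Cauchy's formula to obtain $\operatorname{res}_i(t) = P_i(t)\,e^{\omega_i t}$ with $P_i$ of degree $r_i -1$.

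Now let $n \to \infty$. The Bromwich piece of the contour converges to the full vertical integral (since the original inversion integral exists as a principal value, which is guaranteed by $r/f \in C^\infty_W(\omega_f,\infty)$), while the contribution from $\sigma_n$ vanishes by the standing hypothesis $\lim_{n \to \infty} \int_{\sigma_n} e^{st} r(s)/f(s)\,ds = 0$. Combining these limits, the partial sums $\sum_{i=1}^n P_i(t) e^{\omega_i t}$ converge to the desired value of the Bromwich integral of $r/f$, establishing both the convergence of the series in \eqref{car-1} and the claimed formula.

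The main obstacle, and the reason the hypothesis on $\sigma_n$ is built into the statement, is the control of the contour integral over $\sigma_n$: without a uniform decay condition on $r/f$ at infinity (as was available in Corollary \ref{ic1} via the growth bound \eqref{growth0}), one cannot apply Jordan's lemma or its analogues automatically, so this decay must be imposed by hand. A secondary technical point is the pointwise-in-$t$ justification of swapping limits with the improper Bromwich integral along the vertical line; this however is a standard consequence of the Widder-space membership of $r/f$ and the analogous treatment in \cite[Section 26]{Doetsch}, to which one can appeal in the same way as in the proof of Corollary \ref{ic1}.
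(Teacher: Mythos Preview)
Your proposal is correct and follows essentially the same approach as the paper: the paper's own proof is a one-sentence pointer to Corollary~\ref{ic1} and to the contour-deformation analysis of the inverse Laplace transform in \cite[Section~26]{Doetsch} and \cite[p.~160]{Sc}, which is exactly the residue/limiting-contour argument you spell out. Your write-up in fact supplies the details that the paper leaves to those references, including the reason the $\sigma_n$-hypothesis replaces the growth bound \eqref{growth0} and why the series in \eqref{car-1} converges.
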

\begin{proof}
This corollary is proven as Corollary \ref{ic1}: it follows from the analysis of the complex
inverse Laplace transform appearing in \cite{Doetsch}, see also \cite[p. 160]{Sc}. In particular,
it is explained in \cite[p. 170]{Doetsch} why the series appearing in (\ref{car-1}) indeed converges
for $t \geq 0$.
\end{proof}

We note that the series appearing in Corollary \ref{ic2} is not necessarily differentiable. For instance,
let us take $f = 1 + e^{a s}$ where $a > 0$, and $J = 1$. We wish to solve the equation
\begin{equation} \label{ex2}
(1 + e^{a\,\partial_t})\,\phi = 1 \; , \quad \quad a> 0 \; .
\end{equation}
We take $r = (2 \phi_0 - 1)/s$, so that indeed $r/f$ has an infinite number of poles, $\omega = 0$
and $\omega_n = \frac{2n-1}{a}\,\pi\,i$, $n=0,\pm 1,\pm 2,\cdots$. The solution $\phi$ to Equation
(\ref{ex2}) is
\begin{eqnarray}
\phi & = & \mathcal{L}^{-1}\left( \frac{\mathcal{L}(J) + r}{f} \right) \nonumber \\
     & = & \mathcal{L}^{-1}\left( \frac{1/s + (2\phi_0 - 1)/s}{ 1 + e^{a s}} \right) \nonumber \\
     & = & 2\phi_0\, \mathcal{L}^{-1}\left( \frac{1}{s( 1 + e^{a s})} \right) \; , \label{ex3}
\end{eqnarray}
and the inverse Laplace transform appearing in (\ref{ex3}) is calculated as indicated in Corollary
\ref{ic2}. The answer is in \cite{Sc}:
\begin{equation} \label{ex4}
\phi(t) = 2 \phi_0 \left[ \frac{1}{2} -
\frac{2}{\pi} \sum_{n=0}^\infty \frac{1}{2n-1} \sin \left( \frac{2n-1}{a}\, \pi\,t \right) \right]\; .
\end{equation}
The function $\phi$ satisfies $\phi(0) = \phi_0$, but it is not possible to take $t$-derivative of the
series (\ref{ex4}) in order to get $\phi'(0)\,$! In fact, for $\phi_0 =1$, the function $\phi(t)$ is the
periodic square-wave function
\[
 \phi(t) = \left\{ \begin{array}{ccc} \phi_0 & & 2n \, a \leq t < (2n+1)\,a \; , \\
                                      0      & & (2n+1) \, a \leq t < 2(n+1)\,a \; ,
                   \end{array}
            \right.
\]
$n=0,1,2, \cdots$, and the series expansion (\ref{ex4}) corresponds to its Fourier series representation.

\section{The initial value problem}

In this section we discuss the initial value problem for equations
of the form
\begin{equation} \label{lin_gen_1}
f(\partial_t) \phi = J \; , \quad \quad t \geq 0 \; ,
\end{equation}
in which $f$ belongs to the class $\Gamma$.

\subsection{Generalized initial conditions}

We note that our abstract formula (\ref{sol_lor}) for the solution
$\phi$ to Equation (\ref{lin_gen_1}) tells us that --expanding the
analytic function $r$ appearing in (\ref{sol_lor}) as a power
series (or considering $r=r_d$ where $r_d$ is the series
(\ref{in_con111}))-- $\phi$ depends in principle on an infinite
number of arbitrary constants. However, this fact {\em does not}
mean that the equation itself is superfluous, as formula
(\ref{sol_lor}) for $\phi$ depends essentially on $f$ and $J$. We
think of $r$ as a sort of ``generalized initial condition":

\begin{defi} \label{gics}
 A generalized initial condition for the equation
\begin{equation} \label{ivp0}
f(\partial_t) \phi = J
\end{equation}
is an analytic function $r_0$ such that $(\phi , r_0) \in D_f$ for some
$\phi \in L_{\omega_f}^\infty ({\mathbb R}_+)$.
A generalized initial value problem is an equation such as $(\ref{ivp0})$ together with a generalized
initial condition $r_0$. A solution to a given generalized initial value problem
$\{ (\ref{ivp0}), r_0 \}$ is a function $\phi$ satisfying the conditions of Definition $\ref{main_def}$
with $r = r_0$.
\end{defi}

Thus, given a generalized initial condition, we find a unique
solution for (\ref{lin_gen_1}) using (\ref{sol_lor}), much in the
same way as given {\em one} initial condition we find a unique
solution to  a first order linear ODE. We remark once more (see
comment after the proof of Theorem \ref{main_thm} and the example
after Corollary \ref{ic2}) that there is no reason to believe that
(for a given $r$) the unique solution (\ref{sol_lor}) to
(\ref{ivp0}) will be analytic: within our general context, we can
only conclude that the solution is an integrable exponentially
bounded function. It follows that classical initial value problems
do not even exist in full generality; Definition \ref{gics} is
what replaces them in the framework of our Lorentzian calculus. In
the next subsection we show that --provided $f$ and $J$ satisfy
some extra technical conditions-- we {\em can} consider classical
initial value problems.

We present an explicit example where the foregoing analysis
applies. In \cite[Section 5]{B}, N. Barnaby considers D-brane
decay in a background de Sitter space-time. Up to some parameters,
the equation of motion is
\begin{equation} \label{neil}
e^{-2\, \square} (\square + 1) \phi = \alpha \phi^2 \; ,
\end{equation}
in which $\alpha$ is a constant. In de Sitter space-time we have
$\square = - \partial^2_t - \beta \partial_t$ for a constant
$\beta$, and so Equation (\ref{neil}) becomes
\begin{equation} \label{neil1}
e^{2(\partial_t^2 + \beta \partial_t)} (\partial_t^2 + \beta
\partial_t - 1) \phi = - \alpha \phi^2 \; .
\end{equation}
Expanding the operator $e^{2(\partial_t^2 + \beta \partial_t)}$
formally as a power series, we see that $\phi_0 = 1/\alpha$ solves
(\ref{neil1}). We linearize about $\phi_0$: if $\phi = \phi_ 0 +
\tau\,\psi$, the small deformation $\psi$ satisfies the equation
\begin{equation} \label{neil2}
e^{2(\partial_t^2 + \beta \partial_t)} (\partial_t^2 + \beta
\partial_t - 1) \psi + 2 \psi = 0 \; .
\end{equation}
{\em This} equation can be solved using our Lorentzian calculus!
Consider the entire function
\[
f(s) = e^{2(s^2 + \beta s)} (s^2 + \beta s - 1) + 2 \; ,
\]
and let $r$ be a generalized initial condition, that is, $r$ is an
analytic function such that $r/f$ belongs to Widder space. The
most general solution to (\ref{neil2}) is
\begin{equation}
\psi = \mathcal{L}^{-1}(r/f) \; .
\end{equation}
Now, we can rewrite (\ref{neil2}) as
\begin{equation} \label{neil3}
(\partial_t^2 + \beta \partial_t - 1) \psi = - 2 e^{2
\partial_t^2}\psi (t - 2 \beta) \; ,
\end{equation}
an equation like the ones considered in \cite{B}. We conclude that
$\psi = \mathcal{L}^{-1}(r/f)$ solves (\ref{neil3}), and we note
that, as in Barnaby's analysis, our solution depends on an
essentially arbitrary function given a priori. However, as we show
below, it is enough to assume a {\em finite set} of a priori data
in order to setup initial value problems depending on a finite
number of classical initial conditions.

\subsection{Classical initial value problems}

The main observation on which this subsection rests is that, {\em
if} we can unravel the abstract formula (\ref{sol_lor}) as in
Corollaries \ref{ic}, \ref{ic1} or \ref{ic2}, we can see that in
fact $r$ itself is not essential. The truly important information
needed for formulating (and solving) initial value problems is
encoded in the pole structure of $r(s)/f(s)$.

\begin{defi} \label{civp}
 A classical initial value problem for a nonlocal equation is an equation
\begin{equation} \label{ivp00}
f(\partial_t) \phi = J
\end{equation}
together with a finite set of conditions
\begin{equation} \label{defic}
\phi(0) = \phi_0\; , \quad \phi'(0)= \phi_1\; , \quad \cdots \; ,
\phi^{(k)}(0)=\phi_k \; .
\end{equation}
A solution to a classical initial value problem
$(\ref{ivp00})$-$(\ref{defic})$ is a pair $(\phi,r_0) \in D_f$
satisfying the conditions of Definition $\ref{main_def}$ with $r =
r_0$ such that $\phi$ is differentiable at zero and
$(\ref{defic})$ holds.
\end{defi}

We already remarked in the previous subsection that classical
initial value problems do not exist in general. On the other hand,
as we anticipated (rather roughly) in \cite{GPR_JMP}, following
Barnaby and Kamran's inspiring paper \cite{BK1}, it {\em is}
possible to pose classical initial value problems if we consider,
in addition to a {\em finite number} of initial conditions, some a
priori data directly related to our Laplace transform-based
functional calculus. Intuitively, following Moeller and Zwiebach
\cite[Section 2.3]{Moeller}, a nonlocal equation such as the
$p$-adic string equation
\begin{equation} \label{padicex}
 e^{1/2 \ln p\, \partial_t^2} \phi = \phi^p
\end{equation}
imposes a large number of non-trivial constraints on the set of
possible initial values. Can we describe a consistent set of initial
conditions? In Corollaries \ref{ic},
\ref{ic1}, and \ref{ic2} there are {\em explicit formulas} for
solutions to nonlocal equations of the form (\ref{ivp00}). We would expect these formulas to help us in setting up initial
value problems. Now, Corollary \ref{ic} fixes completely the
solution using {\em only} $f$ and $J$, and therefore it leaves no
room for a classical initial value problem. On the other hand,
formula (\ref{car-1}) of Corollary \ref{ic2} depends on an
infinite number of parameters and, as the example after Corollary
\ref{ic2} shows, this fact implies that we cannot insure
differentiability of the solution (and hence existence of initial value problems) using {\em only} conditions on $f$ and
$J$. But, our explicit formula (\ref{car}) tells us that a
solution for the linear equation (\ref{ivp00}) is uniquely
determined by $f$, $J$, and a {\em finite number} of parameters
related to the singularities of the quotient $r/f$. It is
therefore not unreasonable to expect that, by using these finitely
many parameters, we can set up consistent initial value problems,
as conjectured in \cite{Moeller}.

We use the following two technical lemmas on the differentiability of solutions:

\begin{lem} \label{fo}
Assume that $f$ and $J$ are such that
\begin{equation} \label{cond-1}
y^n\displaystyle\left( \frac{\mathcal{L}(J)}{f} \right) (\omega_f+iy) \in
L^1(\mathbb R)
\end{equation}
for each $n=0,\dots,M$, some $M \geq 0$. Then, the function
\begin{equation} \label{fo1}
t \mapsto \frac{1}{2 \pi i} \int_{\omega_f - i \infty}^{\omega_f +
i \infty} e^{s\,t}\, \left( \frac{\mathcal{L}(J)}{f} \right)(s)\,  ds
\end{equation}
is of class $C^M$.
\end{lem}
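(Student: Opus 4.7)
The plan is to reduce the question to a standard application of differentiation under the integral sign. First I would parametrize the vertical contour as $s = \omega_f + iy$, so $ds = i\,dy$, and rewrite the function in question as
\[
\Phi(t) := \frac{e^{\omega_f t}}{2\pi} \int_{\mathbb{R}} e^{iyt}\, g(y)\, dy,
\qquad
g(y) := \left(\frac{\mathcal{L}(J)}{f}\right)(\omega_f + iy).
\]
Thus $\Phi(t) = e^{\omega_f t} h(t)$, where $h(t) = \frac{1}{2\pi}\int_{\mathbb{R}} e^{iyt} g(y)\, dy$ is (up to normalization) the Fourier transform of $g$ evaluated at $-t$. By the Leibniz rule, it suffices to prove that $h \in C^M(\mathbb{R})$, since $e^{\omega_f t}$ is real-analytic.

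Next I would formally differentiate $h$ under the integral sign $k$ times, $0 \leq k \leq M$, obtaining the candidate derivative
\[
h^{(k)}(t) = \frac{1}{2\pi} \int_{\mathbb{R}} (iy)^k e^{iyt}\, g(y)\, dy.
\]
To justify this, I would use the elementary bound $|y|^k \leq 1 + |y|^M$ for $0 \leq k \leq M$, so that the integrand is dominated in absolute value by $(1 + |y|^M)|g(y)|$, which belongs to $L^1(\mathbb{R})$ by the two hypotheses (\ref{cond-1}) corresponding to $n=0$ and $n=M$. The standard differentiation-under-the-integral-sign theorem (applied inductively for $k = 1, 2, \ldots, M$) then gives the claimed formula for $h^{(k)}$.

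Finally I would verify that $h^{(M)}$ is continuous. Fix a sequence $t_j \to t$; the integrands $(iy)^M e^{iyt_j} g(y)$ converge pointwise to $(iy)^M e^{iyt} g(y)$ and are uniformly dominated by $(1+|y|^M)|g(y)| \in L^1(\mathbb{R})$, so dominated convergence yields $h^{(M)}(t_j) \to h^{(M)}(t)$. Combining with the Leibniz formula $\Phi^{(n)}(t) = \sum_{k=0}^n \binom{n}{k} \omega_f^{n-k} e^{\omega_f t} h^{(k)}(t)$ for $n \leq M$ shows that $\Phi \in C^M(\mathbb{R})$, which is the desired conclusion. The only subtlety here is bookkeeping: one needs $y^k g(y) \in L^1$ for every $k \leq M$ (or equivalently, by interpolation, just for $k = 0$ and $k = M$), and this is precisely what (\ref{cond-1}) guarantees, so no further obstacle arises.
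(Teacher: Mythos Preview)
Your proposal is correct and follows precisely the route indicated in the paper: the paper's proof sketch simply observes that the function (\ref{fo1}) is, after the substitution $s=\omega_f+iy$, the Fourier transform of $y\mapsto (\mathcal{L}(J)/f)(\omega_f+iy)$ (times the smooth factor $e^{\omega_f t}$), and your argument fills in the standard details of differentiation under the integral sign and dominated convergence that make this observation rigorous.
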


The proof of Lemma \ref{fo} consists basically in realizing that the
function (\ref{fo1}) can be viewed as the Fourier transform of the
function
\[
 y \mapsto \left( \frac{\mathcal{L}(J)}{f} \right) (\omega_f+iy) \; .
\]

\begin{lem} \label{tech}
Assume that the conditions of Corollary $\ref{ic1}$ hold, and that
$f$ and $J$ satisfy $(\ref{cond-1})$. Then, solution $(\ref{car})$
to the nonlocal equation
\begin{equation} \label{main}
f(\partial_t)\phi(t) = J(t)
\end{equation}
is of class $C^M$, and it satisfies
\begin{equation}  \label{car-40}
\phi^{(n)}(0) =  L_n + \sum_{i=1}^N\, \sum_{k=0}^{n} \left(\begin{array}{c} n \\
k \end{array} \right) \omega_i^k \left. \frac{d^{n-k}}{dt^{n-k}}
\right|_{t=0} P_i(t) \; , n=0,\dots,M \; ,
\end{equation}
for some numbers $L_n\,$.
\end{lem}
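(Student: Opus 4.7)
My plan is to exploit the explicit form of the solution given in Corollary \ref{ic1}, namely
\[
\phi(t) = \frac{1}{2\pi i}\int_{\omega_f - i\infty}^{\omega_f + i\infty} e^{st}\left(\frac{\mathcal{L}(J)}{f}\right) ds + \sum_{i=1}^N P_i(t)\,e^{\omega_i t},
\]
and treat the two summands separately. The first summand is handled by Lemma \ref{fo}, which gives directly that, under hypothesis (\ref{cond-1}), it is a $C^M$ function of $t$; call its $n$-th derivative evaluated at $t=0$ simply $L_n$. This defines the constants $L_n$ appearing in the statement with no further work. The second summand is a finite sum of entire functions (polynomials times exponentials), hence is smooth on all of $\mathbb{R}$ and contributes a $C^\infty$ term; in particular the full $\phi$ inherits the $C^M$ regularity from the integral term.

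It then remains to compute the $n$-th derivative at zero of the residue sum. This I would do termwise, applying Leibniz's rule to each product $P_i(t)\,e^{\omega_i t}$:
\[
\frac{d^n}{dt^n}\bigl(P_i(t)\,e^{\omega_i t}\bigr) = \sum_{k=0}^{n}\binom{n}{k}\,\omega_i^{k}\,e^{\omega_i t}\,\frac{d^{n-k}}{dt^{n-k}}P_i(t).
\]
Evaluating at $t=0$ makes $e^{\omega_i t}$ disappear, summing on $i$ and combining with $L_n$ yields precisely (\ref{car-40}).

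There is really no substantive obstacle here: the technical content is entirely encapsulated in Lemma \ref{fo} (whose proof reduces to a Fourier-transform argument, using (\ref{cond-1}) to justify differentiation under the integral sign $M$ times), and the remainder is the elementary Leibniz computation above. The only minor point to be careful about is making sure the contour integral and the pole-series decomposition are being used in the form established in Corollary \ref{ic1}, so that the polynomials $P_i$ are exactly those arising from the Laurent expansion of $r/f$ at $\omega_i$ — but this is immediate from how Corollary \ref{ic1} was stated and proved.
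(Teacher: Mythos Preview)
Your proposal is correct and follows essentially the same route as the paper: the paper also splits the solution (\ref{car}) into the contour-integral term and the residue sum, invokes condition (\ref{cond-1}) (via Lemma~\ref{fo}) for the $C^M$ regularity of the integral term, defines $L_n$ exactly as you do in (\ref{ele}), and applies Leibniz's rule termwise to $P_i(t)e^{\omega_i t}$ to obtain (\ref{car-40}). There is no substantive difference between your argument and the paper's.
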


\begin{proof}
Indeed, let us consider Equation (\ref{car}) for the solution
$\phi$ to Equation (\ref{main}). Conditions (\ref{cond-1}) implies
that (\ref{car}) defines a $C^M$ function $\phi$ on the semi-axis
$t \geq 0$. The $t$-derivatives $\phi^{(n)}(t)$ of the solution
$\phi(t)$ are given by
\begin{equation} \label{car-2}
\phi^{(n)}(t) =  \frac{d^{n}}{dt^{n}} \left( \frac{1}{2 \pi i}
\int_{\omega - i \infty}^{\omega + i\infty} e^{s t} \left(
\frac{{\cal L}( J )}{f} \right) \, ds \right) + \sum_{i=1}^N\,
e^{\omega_i t} \sum_{k=0}^{n} \left(\begin{array}{c} n \\
k \end{array} \right) \omega_i^k \frac{d^{n-k}}{dt^{n-k}}P_i(t) \;
,
\end{equation}
and so Equation (\ref{car-40}) follows with
\begin{equation} \label{ele}
L_n = \left.\frac{d^{n}}{dt^{n}}\right|_{t=0} \left( \frac{1}{2 \pi i}
\int_{\omega - i \infty}^{\omega + i\infty} e^{s t} \left(
\frac{{\cal L}( J )}{f} \right) \, ds \right)\; .
\end{equation}
\end{proof}

This lemma was used for us already in \cite{GPR_JMP}, but we reproduce it here since it is crucial for
the analysis that follows.

\begin{tw} \label{ivp2}
Let $f$ be a function on the class $\Gamma$, and fix a function $J$ in $L_{\omega_f}^\infty(\mathbb{R}_+)$
such that $\mathcal{L}(J)/f \in C_W^\infty(\omega_f,\infty)$. Fix also a number $N \geq 0$, a finite number
of points $\omega_i$ to the left of $Re(s)=\omega_f$, and (if $N >0$) a finite number of positive integers
$r_i$, $i =1, . . . ,N$. Set $K=\sum_{i=1}^N r_i$ and assume that condition $(\ref{cond-1})$ holds for all
$n=0,\dots,M$, $M \geq K$. Then, generically, given $K$ initial conditions, $\phi_0, \dots, \phi_{K-1}$,
there exists an analytic function $r_0$ such that
\begin{itemize}
\item[$(\alpha)$] $\displaystyle \frac{r_0}{f}$ has a finite number of poles $\omega_i$ of
                  order $r_i$, $i=1,\dots,N$ to the left of $Re(s)=\omega_f\,$;
\item[$(\beta)$] $\displaystyle \frac{\mathcal{L}(J)+r_0}{f} \in C_W^\infty(\omega_f,\infty)\,$;
\item[$(\gamma)$] $\displaystyle  \left| \frac{r_0(s)}{f(s)} \right| \leq \frac{M}{|s|^p}$
                  for some $p \geq 1$ and $|s|$ sufficiently large.
\end{itemize}
Moreover, the unique solution $\phi$ to Equation $(\ref{ivp00})$ given by $(\ref{sol_lor})$ with $r = r_0$
is of class $C^K$ and it satisfies $\phi(0) = \phi_0, \dots, \phi^{(K-1)}(0)=\phi_{K-1}$.
\end{tw}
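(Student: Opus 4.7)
The plan is to construct $r_0$ explicitly in the rational form $r_0 = f\cdot R$, where
\[
 R(s)=\sum_{i=1}^{N}\sum_{m=1}^{r_i}\frac{a_{m,i}}{(s-\omega_i)^m},
\]
with constants $a_{m,i}\in\mathbb{C}$ to be determined by the initial data. For $r_0$ to be entire one needs $f$ to vanish at each $\omega_i$ to order at least $r_i$, which is implicit in $(\alpha)$: otherwise $r_0/f$ could not have a pole at a regular point of $f$. Granting this, $(\alpha)$ and $(\gamma)$ hold by inspection (with $p=1$ in $(\gamma)$), and $(\beta)$ reduces to $R\in C_W^\infty(\omega_f,\infty)$. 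The latter follows at once from the Laplace-transform isomorphism of Section 2.1, since each summand is the Laplace transform of $\tfrac{t^{m-1}}{(m-1)!}e^{\omega_i t}\in L^\infty_{\omega_f}(\mathbb{R}_+)$, using that $\mathrm{Re}(\omega_i)<\omega_f$.

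Having set up $r_0$, I would invoke Corollary \ref{ic1} to represent $\phi$ by the explicit formula $(\ref{car})$, with $P_i(t)=\sum_{m=1}^{r_i}a_{m,i}\,t^{m-1}/(m-1)!$. Hypothesis $(\ref{cond-1})$ with $M\ge K$ then allows one to apply Lemma \ref{tech}, concluding that $\phi\in C^K$ and that the first $K$ derivatives of $\phi$ at zero are given by $(\ref{car-40})$, with constants $L_n$ depending only on $f$ and $J$. Imposing $\phi^{(n)}(0)=\phi_n$ for $n=0,\ldots,K-1$ then becomes a linear system $A\,\alpha=\beta$ of $K$ equations in the $K$ unknowns $\alpha=(a_{m,i})$, with matrix $A$ determined only by the data $(\omega_i,r_i)$ and right-hand side $\beta_n=\phi_n-L_n$.

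The crux of the proof is to show $A$ is invertible. Since $A$ is an endomorphism of $\mathbb{C}^K$ it suffices to prove injectivity; but $A\alpha=0$ says exactly that the exponential polynomial $g(t)=\sum_{i=1}^{N}P_i(t)e^{\omega_i t}$ satisfies $g^{(n)}(0)=0$ for all $n=0,\ldots,K-1$. Such a $g$ solves the constant-coefficient linear ODE $\prod_{i=1}^{N}(\partial_t-\omega_i)^{r_i}g=0$ of order $K$, so the classical Cauchy uniqueness theorem forces $g\equiv 0$; by linear independence of $\{t^{m-1}e^{\omega_i t}\}$ (the $\omega_i$ being distinct) every $a_{m,i}$ must vanish, yielding injectivity. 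Uniqueness of the resulting solution $\phi$ itself is already contained in Theorem \ref{main_thm}.

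I expect the main obstacle to be precisely this invertibility step, which couples the combinatorial (block-Vandermonde) structure of $A$ coming from $(\ref{car-40})$ with the analytic fact that exponential polynomials built from $\{\omega_i,r_i\}$ form a fundamental system of a constant-coefficient linear ODE of order $K$. The word \emph{generically} in the statement reflects a secondary caveat: the construction above yields poles of $r_0/f$ of order \emph{exactly} $r_i$ only when each leading coefficient $a_{r_i,i}\neq 0$, a condition satisfied off a finite union of proper affine hyperplanes in the space of initial data $(\phi_0,\ldots,\phi_{K-1})$.
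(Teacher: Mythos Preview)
Your construction is exactly the paper's: set $r_0=f\cdot\mathcal{L}\!\left(\sum_i P_i(t)e^{\omega_i t}\right)$, invoke Corollary~\ref{ic1} and Lemma~\ref{tech}, and determine the coefficients $a_{m,i}$ from the $K\times K$ linear system~(\ref{car-4}). The paper presents these steps in the opposite order (first solve the linear system, then define $r_0$), but the content is identical.

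Where you go beyond the paper is in the invertibility step. The paper simply asserts that the system ``(generically, depending on the points $\omega_i$) can be solved uniquely'', whereas your ODE argument---observing that $g=\sum_i P_i e^{\omega_i t}$ lies in the kernel of $\prod_i(\partial_t-\omega_i)^{r_i}$ and applying Cauchy uniqueness---actually proves the matrix is \emph{always} invertible once the $\omega_i$ are distinct. This is a genuine sharpening. Correspondingly, your reading of the word \emph{generically} (as referring to the open condition $a_{r_i,i}\neq 0$ on the initial data, needed so that the pole orders are exactly $r_i$) is more precise than the paper's, which attaches the caveat to the $\omega_i$ instead. Your remark that $r_0\in\mathcal{R}$ forces $f$ to vanish at each $\omega_i$ to order $\geq r_i$ is also a point the paper passes over in silence.
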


\begin{proof}
We take $K$ arbitrary numbers $\phi_n$, $n=0,1,\ldots ,K-1$, and, motivated by Lemma \ref{tech}, we set up
the linear system
\begin{equation}  \label{car-4}
\phi_n =  L_n + \sum_{i=1}^N\, \sum_{k=0}^{n} \left(\begin{array}{c} n \\
k \end{array} \right) \omega_i^k \left. \frac{d^{n-k}}{dt^{n-k}}
\right|_{t=0} P_i(t) \; , \; \; \; n = 0 \dots K-1\; ,
\end{equation}
in which $$L_n = \left.\frac{d^{n}}{dt^{n}}\right|_{t=0} \left( \frac{1}{2 \pi i}
\int_{\omega - i \infty}^{\omega + i\infty} e^{s t} \left(
\frac{{\cal L}( J )}{f} \right) \, ds \right)\; ,$$
and
$$
P_i(t) = a_{1,i} + a_{2,i}\, \frac{t}{1!} + \cdots + a_{r_i,i}\, \frac{t^{r_i - 1}}{(r_i -1)!}
$$
are polynomials to be determined. Condition (\ref{cond-1}) guarantees us that the numbers $L_n$ are well
defined. System (\ref{car-4}) is a linear system for the coefficients of the polynomials $P_i(t)$ which
(generically, depending on the points $\omega_i$) can be solved uniquely in terms of the arbitrary data
$\phi_n\,$. These polynomials allow us to construct the solution to Equation (\ref{main})\,:

We set
\begin{equation} \label{l-1}
r_0(s) = f(s)\, {\mathcal L} \left( \sum_{i=1}^N P_i(t) e^{\omega_i t} \right)(s) \; .
\end{equation}
We have the identity
\begin{equation} \label{l-2}
 {\mathcal L}^{-1}(r_0/f) = \sum_{i=1}^N P_i(t) e^{\omega_i t} \; ,
\end{equation}
and we easily conclude that $r_0/f$ satisfies the conditions ($\alpha$), ($\beta$), and ($\gamma$)
appearing in the enunciate of the theorem. Now we define
\begin{equation} \label{car-7}
\phi(t) = \frac{1}{2 \pi i} \int_{\omega - i \infty}^{\omega +
i\infty} e^{s t} \left( \frac{{\cal L}( J )}{f} \right) \, ds +
\sum_{i=1}^N P_i(t)\,e^{\omega_i\, t} \; ,
\end{equation}
and we claim that this function is the solution to Equation (\ref{main}). In fact,
the foregoing analysis implies that
\[
 \phi(t) = \frac{1}{2 \pi i} \int_{\omega - i \infty}^{\omega +
i\infty} e^{s t} \left( \frac{{\cal L}( J )}{f} \right) \, ds + {\mathcal L}^{-1}(r_0/f) \; ,
\]
and this is precisely the unique solution to (\ref{main}) appearing in Corollary \ref{ic1} for $r=r_0$.

Now we show that this solution satisfies $\phi^{(n)}(0) = \phi_n$ for $n = 0 , \dots, K-1$.
Indeed, condition (\ref{cond-1}) tells us that $\phi(t)$ is at least of class $C^K$ and clearly
\begin{equation}  \label{car-9}
\phi^{(n)}(0) =  L_n + \sum_{i=1}^N\, \sum_{k=0}^{n} \left(\begin{array}{c} n \\
k \end{array} \right) \omega_i^k \left. \frac{d^{n-k}}{dt^{n-k}}
\right|_{t=0} P_i(t)
\end{equation}
in which
\begin{equation} \label{ele-9}
L_n = \left.\frac{d^{n}}{dt^{n}}\right|_{t=0} \left( \frac{1}{2 \pi i}
\int_{\omega - i \infty}^{\omega + i\infty} e^{s t} \left(
\frac{{\cal L}( J )}{f} \right) \, ds \right)\; .
\end{equation}
Comparing (\ref{car-4}) and (\ref{car-9}) we obtain $\phi^{(n)}(0) = \phi_n$, $n = 0, \dots, K-1$.
Thus, we can freely chose the first $K$ derivatives $\phi^{(n)}(0)$,
$n=0, \dots , K-1$.  On the other hand, from $n = K$ onward,
Equation (\ref{car-9}) for the derivatives of $\phi(t)$ and the foregoing analysis imply
that the values of $\phi^{(n)}(0)$, $n \geq K$, are completely determined by the $K$ initial
conditions $\phi_n$.
\end{proof}

We remark that the above proof shows that the points $\omega_i$ become the poles of the quotient $r_0/f$,
and that the numbers $r_i$ are their respective orders. The proof also shows that it is essential to give
{\em a priori} a finite number of points $\omega_i$ in order to have classical initial value problems. If
no points $\omega_i$ are present, the solution to the nonlocal equation (\ref{ivp00}) is simply
\[
\phi = \mathcal{L}^{-1}(\mathcal{L}(J)/f)\; ,
\]
a formula which fixes completely (\,for $f$ and $J$ satisfying (\ref{cond-1})\,) the values of all the
derivatives of $\phi$ at zero.

In conclusion, the following is a complete set of data which allows us to set up classical initial value
problems for equations (\ref{ivp00}) in which $f$ and $J$ satisfy (\ref{cond-1}) sufficiently many times:
\begin{equation} \label{total_data}
\left\{ N \geq 0\; ; \; \; \{\omega_i \in \mathbb{C} \}_{1 \leq i \leq N}\; ;
\; \; \{r_i \in \mathbb{Z} \}_{1 \leq i \leq N}\; ;
\; \; \{ \phi^{(n)}(0) = \phi_n \}_{1 \leq i \leq N} \right\} \; .
\end{equation}
With this data we can set up the linear system (\ref{car-4}) and
effectively construct the unique solution (\ref{car-7}) to
Equation (\ref{ivp00}).

\section{Concluding remarks}

We have developed a Lorentzian functional calculus adequate for interpreting nonlocal operators
appearing in models of particle physics, string theory, and gravity. This calculus is directly
related to the initial value problem for nonlocal equations: as seen in Definition \ref{def0},
the interpretation of an operator of the form $f(\partial_t)$ depends on the choice of some a priori
data (the analytic function $r$ of Definition \ref{def0}). This data is considered as a generalized
initial condition for an equation of the form $f(\partial_t)\phi = J(t)$. However, we can be much more
specific. If $f$ and $J$ satisfy some technical conditions (see Lemmas \ref{fo} and \ref{tech}), we can
take derivatives of the solution, at least a finite number of times, and we can consider a classical
initial value problem for nonlocal equations (Definition \ref{civp}). As shown in Theorem \ref{ivp2},
we can solve this classical initial value problem explicitly: starting with a finite
number of initial conditions and a finite set of extra data (see (\ref{total_data}) and Definition
\ref{civp}) we can obtain a unique regular solution to the given nonlocal equation.

It seems to us that the freedom in the choice of a priori data is
an important feature of our approach, potentially of interest for applications. Thus, it does not appear
to be obvious how to generalize Theorem \ref{ivp2} if $N$ is allowed to be infinite: besides the
fact that if we wish to use the formulae appearing in Corollary
\ref{ic2} we must solve an infinite linear system in
order to obtain an appropriate function $r$ (see proof of Theorem \ref{ivp2}), we would
need to give technical conditions on $f$, $J$ {\em and} on the a priori data
(\ref{total_data}) in order to assure that the solution be
differentiable at $t=0$.

\paragraph{Acknowledgements}

We would like to thank two referees for their thoughtful remarks and constructive criticism.
E.G. Reyes is partially supported by FONDECYT grant \#1111042;
H. Prado is partially supported by project DICYT USACH \#041133PC.

\end{document}